\documentclass[aps,floats,twocolumn,prl,nofootinbib]{revtex4-2}
\usepackage{amsfonts,amsmath,amssymb,ascmac,bm,amsthm}
\usepackage{fnpct} 
\usepackage{comment}
\usepackage{ifpdf}
\usepackage{graphicx}
\usepackage{slashed}
\usepackage{color}
\usepackage[mathscr]{eucal}
\usepackage[utf8]{inputenc}
\usepackage{physics}
\usepackage{cancel}
\usepackage{float}
\usepackage{soul}
\usepackage{booktabs}
\usepackage{simpler-wick}
\usepackage{hyperref}
\usepackage{tensor}
\usepackage{upgreek}
\usepackage{caption}
\usepackage{subfigure}
\usepackage{subcaption}
\usepackage{tikz}
\usepackage{appendix}

\newcommand\mi{\mathrm{i}}
\newcommand\me{\mathrm{e}}

\newcommand{\dif}{\mathrm{d}}

\begin{document}
\newcommand{\Spec}{\operatorname{Spec}}
\newtheorem{theorem}{Theorem}

\title{Symmetry preservation in black hole quasinormal mode spectra}

\author{Han-Wen Hu$^{1,2}$}
\email{huhanwen@itp.ac.cn}
\author{Chen Lan$^{3}$}
\email{stlanchen@126.com}
\author{Zong-Kuan Guo$^{1,2,4}$}
\email{guozk@itp.ac.cn}
\author{Rong-Gen Cai$^{5}$}
\email{caironggen@nbu.edu.cn}
\affiliation{$^1$Institute of Theoretical Physics, Chinese Academy of Sciences, P.O. Box 2735, Beijing 100190, China}
\affiliation{$^2$School of Physical Sciences, University of Chinese Academy of Sciences, No.19A Yuquan Road, Beijing 100049, China}
\affiliation{$^3$Department of Physics, Yantai University, 30 Qingquan Road, Yantai 264005, China}
\affiliation{$^4$School of Fundamental Physics and Mathematical Sciences, Hangzhou Institute for Advanced Study, University of Chinese Academy of Sciences, Hangzhou 310024, China}
\affiliation{$^5$Institute of Fundamental Physics and Quantum Technology, \& School of Physical Science and Technology, Ningbo University, Ningbo 315211, China}

\begin{abstract}
We establish a general relation between symmetries of gravitational theories and black hole (BH) quasinormal mode (QNM) spectra. 
We show that a symmetry implies isospectrality when it induces a bijection between the corresponding QNM boundary value problems.
However, conformally related BHs have been reported to exhibit both conformal factor dependent and independent QNM spectra.
To resolve this issue, we develop a reduction scheme for higher order perturbation equations. 
Applied to pure Weyl gravity, it yields the complete axial spectrum of Schwarzschild, including Regge-Wheeler and spin-1 branches.
Our results show that these seemingly contradictory conclusions result from differences in the perturbation dynamics or in the boundary conditions.
\end{abstract}

\maketitle

\noindent
{\bf \emph{Introduction.--}}
Symmetries constrain the perturbation equations of BHs and can also establish spectral correspondences between different perturbation sectors or backgrounds.
QNMs of BHs are discrete resonances of linear perturbations subject to specified boundary conditions \cite{Kokkotas:1999bd,Berti:2009kk}, and their frequencies are poles of the retarded Green's function \cite{Leaver:1986gd,Hui:2019aox,Jaramillo:2020tuu}.
The isospectrality of axial and polar perturbations of a Schwarzschild BH can be understood through a Darboux transformation, while the Chandrasekhar duality can be promoted to an off-shell symmetry of the action for linearized Schwarzschild perturbations \cite{Glampedakis:2017rar,Solomon:2023ltn}.
Existing work has further examined whether corrections involving higher derivatives in gravitational theories break isospectrality between the axial and polar sectors \cite{Cano:2024wzo,Bah:2026aia}.
More recently, chiral structures have been used to derive sufficient conditions for isospectrality in more general perturbation systems \cite{Weller:2026jqu}.
These studies mainly concern spectral degeneracies between the axial and polar sectors, or between independent perturbation branches, on the same background.
A symmetry transformation that leaves the action invariant can also relate different BH solutions of the same theory and provide a map between their perturbation equations.
For BHs related by such a transformation, equality of their QNM spectra depends on whether the perturbation map contains all physical perturbations and preserves the QNM boundary conditions and the normalization of the time coordinate.
In gravity theories with higher derivatives, additional degrees of freedom change the solution space of the linearized equations \cite{Tattersall:2017erk,Cardoso:2019mqo,McManus:2019ulj,Cano:2023tmv,Blazquez-Salcedo:2016enn}, and generate new frequency branches and pole structures \cite{Zinhailo:2018ska,Antoniou:2024jku,Konoplya:2025afm,Cano:2020cao,Konoplya:2022iyn,Antoniou:2026nhh}; the perturbation map induced by a symmetry transformation must therefore include these additional branches.

Four dimensional pure Weyl gravity provides a concrete example of this problem.
Its action has local conformal symmetry, its vacuum field equation is the Bach equation, and it propagates degrees of freedom beyond the Einstein sector \cite{Mannheim:2011ds,Lu:2011ks,Deser:2012qg,Bergshoeff:2011ri}.
The Mannheim-Kazanas family gives static, spherically symmetric Bach vacuum geometries \cite{Mannheim:1988dj,Riegert:1984zz}, whose metrics are locally conformal to Schwarzschild-(A)dS Einstein metrics \cite{Schmidt:1999vp,Bambi:2016wdn}.
Thus, local conformal transformations relate a family of BH solutions within the same action, while the QNM spectrum on each background is determined jointly by the corresponding Bach perturbation equation and boundary conditions.
Existing QNM calculations for conformally related BHs have considered either test fields with different conformal properties \cite{Momennia:2018hsm,Momennia:2019cfd,Konoplya:2020fwg,Becar:2023jtd,Konoplya:2025mvj,Lutfuoglu:2025hjy,Toshmatov:2017bpx}, effective Einstein dynamics after conformal symmetry breaking \cite{Chen:2019iuo,Liu:2020ddo}, or a second-order Regge-Wheeler type equation on a particular metric \cite{Momennia:2019edt}.
These calculations concern different dynamical systems or spectral problems.
A comparison of QNM spectra on conformally related backgrounds therefore requires the complete axial spectrum determined by the linearized Bach equation and a check that the conformal transformation preserves the same boundary value problem.

To address this problem, we first prove a QNM spectral theorem applicable to general symmetry transformations and give sufficient conditions for isospectrality between BHs related by symmetry transformations and for preservation of the resonance pole orders.
We then apply the theorem to pure Weyl gravity by taking a Schwarzschild-(A)dS metric as the seed, introducing the linearized Einstein tensor as an intermediate variable, and reducing the Bach perturbation equation to two coupled second-order radial equations.
We construct the map between the master variables before and after the conformal transformation, derive the corresponding relation between the radial Bach operators, and check whether the boundary conditions and the normalization of the time coordinate are preserved \cite{Langlois:2021xzq}.
For a conformal transformation satisfying the theorem, the spectrum of the Schwarzschild representative determines the common spectrum of the corresponding BH family.

On a Schwarzschild background, the complete axial spectrum contains the Regge-Wheeler and spin-1 branches, and the latter shows that a single Regge-Wheeler equation cannot give the complete axial perturbation spectrum of pure Weyl gravity.
The repeated spin-2 degree of freedom makes the Regge-Wheeler frequencies second-order poles \cite{Lu:2011zk,Porrati:2011ku}.
The theorem can likewise be used to determine whether BH solutions related by other continuous or discrete symmetry transformations have the same QNM spectrum.
The reduction scheme can also be used for perturbations in higher derivative theories with similar structures \cite{Mauro:2015waa,Antoniou:2024jku}.

\vspace{5pt}
\noindent
{\bf \emph{Spectral Symmetry.--}}
We first prove this theorem and then use it to analyze axial perturbations in pure Weyl gravity.
\begin{theorem}\label{thm:qnm-spectral-symmetry}
For two BH solutions of the same gravitational theory related by a symmetry transformation, if the complete physical perturbation map induced by the transformation is a bijection between the domains of the two QNM boundary value problems, and the transformation preserves time translations and their normalization, then the two BHs have the same QNM spectrum.
If the maps induced by the transformation, together with their inverses, are analytic in the frequency domain under consideration, the pole locations and orders of the corresponding retarded Green's functions are also the same.
\end{theorem}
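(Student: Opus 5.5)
We set up the two QNM problems in frequency space and show that the symmetry map intertwines them. Fix a common time coordinate $t$ on both backgrounds $g$ and $\tilde g=\mathcal{S}[g]$. Because the transformation maps the Killing field $\partial_t$ to $\partial_{\tilde t}$ with the same normalization, the induced perturbation map $\Phi$ commutes with $\partial_t$. It therefore acts fibrewise on Fourier modes $e^{-\mi\omega t}$, giving a family $\Phi_\omega$ that does not rescale $\omega$.

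Write the linearized field equations as $E_\omega[h]=0$ on $g$ and $\tilde E_\omega[\tilde h]=0$ on $\tilde g$. Let $\mathcal{D}_\omega$ and $\tilde{\mathcal{D}}_\omega$ be the spaces of perturbations obeying the QNM boundary conditions: ingoing at the horizon, and outgoing (or the appropriate AdS condition) at the outer boundary. The QNM spectrum is $\Spec=\{\omega:\ker E_\omega\cap\mathcal{D}_\omega\neq 0\}$.

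The first step is to show that $\Phi_\omega$ maps solutions to solutions. The symmetry leaves the action invariant and maps $g$ to $\tilde g$. Differentiating $\mathcal{S}$ along a one-parameter family $g+\epsilon h$ then gives $\tilde E\circ \dif\mathcal{S}=M\circ E$ for some invertible operator $M$, since equations of motion transform covariantly under a symmetry. Restricting to frequency $\omega$, $\Phi_\omega$ sends $\ker E_\omega$ into $\ker \tilde E_\omega$. By hypothesis it is a bijection $\mathcal{D}_\omega\to\tilde{\mathcal{D}}_\omega$, so nontrivial QNM solutions correspond one to one and $\Spec(g)=\Spec(\tilde g)$. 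Two points need care. First, completeness of the map is what guarantees that no branch, such as the extra branches of higher-derivative theories, is lost. Second, preserving the normalization of $t$ is what excludes a rescaling $\omega\to\lambda\omega$ of the spectrum.

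For the pole statement, we express the retarded Green's function through the Wronskian-type construction. Let $G_\omega=E_\omega^{-1}$ be restricted to sources whose responses satisfy the boundary conditions. The intertwining relation gives $\tilde G_\omega=\Phi_\omega G_\omega M_\omega^{-1}$ on the relevant domain, and conversely $G_\omega=\Phi_\omega^{-1}\tilde G_\omega M_\omega$.

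If $\Phi_\omega^{\pm1}$ and $M_\omega^{\pm1}$ are analytic in $\omega$ on the region considered, these identities imply that every pole of $\tilde G$ is a pole of $G$ of at most the same order, and vice versa. The orders therefore coincide. Equivalently, the Laurent coefficients are related by analytic invertible factors, so the leading singular term cannot cancel.

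I expect the main obstacle to be showing that $M_\omega$ is analytic and invertible. Equivalently, one must show that the boundary-condition bijection holds uniformly in a neighbourhood of each QNM frequency rather than just at it. Otherwise the map could degenerate exactly at a pole and change its order. I would handle this by working with the determinant of the connection matrix between horizon and infinity bases. I would then show that this determinant transforms by an analytic nonvanishing factor, so its zeros and their multiplicities are preserved.
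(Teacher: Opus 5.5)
Your proposal is correct and follows essentially the paper's proof: your linearized intertwining relation $\tilde E\circ\dif\mathcal{S}=M\circ E$ is the paper's $\mathcal{L}_s\mathcal{U}_s=\mathcal{W}_s[\bar\varphi]\mathcal{L}_0$, which the paper derives explicitly from action invariance (varying the inverse transformation, integrating by parts, and noting that the variation of $\mathcal{W}_s$ multiplies the vanishing background equations). It then uses the same frequency-domain restriction, the same boundary-value bijection, and the same resolvent identity $\mathcal{G}_s=\mathcal{T}_s\mathcal{G}_0\mathcal{W}_s^{-1}$ with analytic invertible factors. The ``obstacle'' you anticipate is not one: analyticity of the induced maps and their inverses is assumed in the theorem's second part, and the paper obtains invertibility of $\mathcal{W}_s$ by repeating the construction for the inverse transformation, so the connection-matrix determinant argument is unnecessary.
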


\begin{proof}
We use $\varphi^A$ to denote the metric and all dynamical fields, and write the field equations as $\mathscr{F}_A[\varphi]\equiv\delta S/\delta\varphi^A=0$.
The symmetry transformation $\tilde{\varphi}=s[\varphi]$ satisfies $S[\tilde{\varphi}]=S[\varphi]$; the derivation below is unchanged if the two sides differ by a boundary term that does not affect the field equations.
Decomposing the inverse transformation into parts without and with field derivatives, its linear variation can be written as
\begin{equation}\label{eq:derivative-transformation}
 \delta\varphi^B
 =K^B{}_A\delta\tilde{\varphi}^A
 +K^{B\mu}{}_A\partial_{\mu}\delta\tilde{\varphi}^A+\cdots,
\end{equation}
where all $K$'s are fixed by the inverse transformation.
Varying the action before and after the transformation in the same coordinates and integrating the derivative terms by parts gives
\begin{align}\label{eq:eom-symmetry-map}
 \mathscr{F}_A[\tilde{\varphi}]
 &=K^B{}_A\mathscr{F}_B
 -\partial_{\mu}\!\left(K^{B\mu}{}_A\mathscr{F}_B\right)+\cdots \nonumber\\
 &\equiv\left(\mathcal{W}_s[\varphi]\mathscr{F}[\varphi]\right)_A,
\end{align}
where the $\mathscr{F}_B$ on the right-hand side are evaluated on the original fields, and $\mathcal{W}_s$ denotes the induced action on the field equations.

Let the two BH backgrounds be $\bar{\varphi}$ and $\bar{\varphi}_s=s[\bar{\varphi}]$.
Equation~\eqref{eq:eom-symmetry-map} shows that $\mathscr{F}[\bar{\varphi}_s]=0$ whenever $\mathscr{F}[\bar{\varphi}]=0$; the inverse transformation gives the converse.
Add a perturbation $h$ to the original background and define
\begin{equation}\label{eq:linearized-symmetry-map}
 s[\bar{\varphi}+\epsilon h] \equiv \bar{\varphi}_s+\epsilon\mathcal{U}_s h+O(\epsilon^2).
\end{equation}
Let $\mathcal{L}_0$ and $\mathcal{L}_s$ denote the complete linearized operators on the two backgrounds, so that $\mathscr{F}[\bar{\varphi}+\epsilon h]=\epsilon\mathcal{L}_0h+O(\epsilon^2)$ and $\mathscr{F}[\bar{\varphi}_s+\epsilon\tilde{h}]=\epsilon\mathcal{L}_s\tilde{h}+O(\epsilon^2)$.
Substituting Eq.~\eqref{eq:linearized-symmetry-map} into Eq.~\eqref{eq:eom-symmetry-map} and comparing terms of first order in $\epsilon$ gives $\mathcal{L}_s\mathcal{U}_s=\mathcal{W}_s[\bar{\varphi}]\mathcal{L}_0$, because $\mathscr{F}[\bar{\varphi}]=0$ makes the variation of $\mathcal{W}_s$ vanish.

Since the symmetry transformation preserves time translations and their normalization, a perturbation with frequency $\omega$ still evolves as $\me^{-\mi\omega t}$ after the transformation.
Writing $\mathcal{U}_s(\me^{-\mi\omega t}\Phi)=\me^{-\mi\omega t}\mathcal{T}_s(\omega)\Phi$, the preceding relation becomes in the frequency domain
\begin{equation}\label{eq:frequency-symmetry-map}
 \mathcal{L}_s(\omega)\mathcal{T}_s(\omega)
 =\mathcal{W}_s(\omega)\mathcal{L}_0(\omega).
\end{equation}
The perturbation spaces retain all linearized constraints, and solutions differing only by allowed pure gauge perturbations are identified as the same physical solution.
If $\Phi$ is a nonzero physical solution satisfying the QNM boundary conditions and $\mathcal{L}_0(\omega)\Phi=0$, Eq.~\eqref{eq:frequency-symmetry-map} shows that $\mathcal{T}_s(\omega)\Phi$ satisfies the transformed perturbation equation and the corresponding boundary conditions.
The inverse transformation gives the converse correspondence, so the physical QNMs of the two BHs are in bijection and
\begin{equation}\label{eq:qnm-spectrum}
 \Spec_{\rm QNM}(\mathcal{L}_s)
 =\Spec_{\rm QNM}(\mathcal{L}_0).
\end{equation}
Here $\Spec_{\rm QNM}(\mathcal{L})$ denotes the QNM spectrum of the perturbation operator $\mathcal{L}$, namely the set of frequencies $\omega$ for which $\mathcal{L}(\omega)\Phi=0$ has a nonzero physical solution satisfying the QNM boundary conditions.
Because the perturbation map is invertible, the number of independent physical modes at each frequency is also unchanged.

For the sourced equation $\mathcal{L}_0\Phi=J$, Eq.~\eqref{eq:frequency-symmetry-map} also gives the source transformation $\tilde{J}=\mathcal{W}_sJ$.
Repeating the derivation for the inverse transformation shows that $\mathcal{W}_s$ is likewise invertible on the corresponding equation and source spaces.
Let $\mathcal{G}_0=\mathcal{L}_0^{-1}$ and $\mathcal{G}_s=\mathcal{L}_s^{-1}$ be the retarded Green's functions satisfying the corresponding boundary conditions. Then
\begin{equation}\label{eq:general-resolvent-map}
 \mathcal{G}_s(\omega)
 =\mathcal{T}_s(\omega)\mathcal{G}_0(\omega)\mathcal{W}_s^{-1}(\omega).
\end{equation}
If $\mathcal{T}_s$, $\mathcal{W}_s$, and their inverses are analytic in the frequency domain under consideration, they generate no new poles and cannot remove the highest-order coefficient of an existing pole. 
The two retarded Green's functions therefore have the same pole locations and orders.
\end{proof}

We now apply the theorem to Weyl gravity.
We first construct the complete axial Bach perturbations and their conformal map on a general Schwarzschild-(A)dS-type Einstein metric; we then use the Ricci-flat Schwarzschild metric as a representative to determine the QNM spectrum of the conformally related BHs that satisfy the theorem's conditions.

\vspace{5pt}
\noindent
{\bf \emph{Axial Perturbation and Conformal Mapping.--}}
In four-dimensional spacetime, the action of Weyl gravity and its vacuum field equation are
\begin{equation}\label{eq:weyl-action}
 S_{\rm W}=\int \dif^4x\sqrt{-g}\;  C_{\mu\nu\rho\sigma}C^{\mu\nu\rho\sigma},
 \quad \mathcal{B}_{\mu\nu}=0,
\end{equation}
where $\mathcal{B}_{\mu\nu}$ is the Bach tensor.
Static, spherically symmetric Bach vacuum solutions are described by the Mannheim-Kazanas family and can locally be written as conformal transformations of Schwarzschild-(A)dS Einstein metrics.
The local solution space of the complete Bach perturbation problem can therefore be obtained by a conformal transformation, so we first study the Schwarzschild-(A)dS metric
\begin{align}\label{eq:einstein-representative}
 	\dif s_{\rm Sch}^2 & = -F(\rho)\dif t^2+\frac{\dif\rho^2}{F(\rho)}+\rho^2\dif\Omega_2^2, \nonumber \\
 	F(\rho) & = 1-\frac{2M}{\rho}+\frac{\hat{\Lambda}}{3}\rho^2,
\end{align}
and consider its conformal transformation $\tilde{g}_{\mu\nu}=C^2(\rho)g^{\rm Sch}_{\mu\nu}$, where $C(\rho)$ is smooth, finite, and nonzero throughout the BH exterior.

In the Regge-Wheeler gauge, axial perturbations are parametrized by $h_{tA}=\me^{-\mi\omega t}h_0(\rho)S_A^{\ell m}$ and $h_{\rho A}=\me^{-\mi\omega t}h_1(\rho)S_A^{\ell m}$, where $A,B$ label angular coordinates, $\omega$ is the complex frequency, $Y_{\ell m}$ is a scalar spherical harmonic, and $S_A^{\ell m}=\epsilon_A{}^B D_B Y_{\ell m}$.
Here $D_A$ and $\epsilon_{AB}$ are the covariant derivative and volume form on the unit two-sphere, respectively.
Introducing the tortoise coordinate $\dif r_*/\dif\rho=F^{-1}$, we define the master variables in the Regge-Wheeler gauge,
\begin{equation}\label{eq:master-variables}
 \Psi=\frac{F}{\rho}h_1, \quad \Xi=\frac{\mi\omega h_0+\partial_{r_*}(\rho\Psi)}{F}.
\end{equation}
For $\omega\neq0$, $h_0$ and $h_1$ are uniquely determined by Eq.~\eqref{eq:master-variables}; hence $\Psi$ and $\Xi$ contain the full axial metric perturbation.

To separate the fourth-order Bach perturbation equation, we start from the linearized Einstein equation and define
\begin{equation}\label{eq:linearized-einstein-tensor}
 \mathcal{E}_{\mu\nu}[h] \equiv\delta G_{\mu\nu}[h]-\hat{\Lambda}h_{\mu\nu},
\end{equation}
where $h_{\mu\nu}$ is the metric perturbation and $\delta G_{\mu\nu}$ is the linearized Einstein tensor.
We decompose the Bach tensor using the Schouten tensor $P_{\mu\nu}=(R_{\mu\nu}-Rg_{\mu\nu}/6)/2$.
Together with the tracelessness of axial perturbations and the linearized Bianchi identity, this allows the linearized Schouten tensor to be written entirely in terms of $\mathcal{E}_{\mu\nu}$.
The derivation in Appendix~\ref{app:linearized-bach} gives
\begin{align}\label{eq:linearized-bach-image}
	 \delta\mathcal{B}_{\mu\nu} = \frac{1}{2}\left[ \left(\nabla^2+\frac{4\hat{\Lambda}}{3}\right)\mathcal{E}_{\mu\nu}+2C_{\mu\rho\nu\sigma}\mathcal{E}^{\rho\sigma}\right].
\end{align}
Here $\nabla_\mu$ and $C_{\mu\nu\rho\sigma}$ are constructed from the background metric $g_{\mu\nu}^{\rm Sch}$.
The fourth derivatives in the Bach perturbation therefore enter through the second-order tensor $\mathcal{E}_{\mu\nu}[h]$.

Separating Eq.~\eqref{eq:linearized-bach-image} gives two radial operators,
\begin{subequations}
	\begin{align}\label{eq:radial-operators}
		\mathcal{R}_1 & =\partial_{r_*}^2+\omega^2
		-F\frac{\ell(\ell+1)}{\rho^2}, \\
		\mathcal{R}_2 & =\partial_{r_*}^2+\omega^2
		-F\left[\frac{\ell(\ell+1)}{\rho^2}-\frac{6M}{\rho^3}\right].
	\end{align}
\end{subequations}
Here $\mathcal{R}_2$ is the axial Regge-Wheeler operator, while $\mathcal{R}_1$ has the same form as the radial operator for an axial Maxwell field.
We further define
\begin{equation}\label{eq:q-definition}
 Q=F^{-1}\mathcal{R}_2\Psi-\frac{2M}{\rho^3}\Xi.
\end{equation}
Substituting the spherical-harmonic components of $\mathcal{E}_{\mu\nu}$ into Eq.~\eqref{eq:linearized-bach-image} and eliminating dependent components with the linearized Bianchi identity puts the two independent Bach perturbation equations in the form
\begin{equation}\label{eq:two-channel}
 \begin{pmatrix}
  \mathcal{R}_1 & -2F \\[2pt]
  -\dfrac{8M\hat{\Lambda}F}{3\rho^3} &
  \mathcal{R}_2-\dfrac{2\hat{\Lambda}F}{3}
 \end{pmatrix}
 \begin{pmatrix}
  \Xi \\ Q
 \end{pmatrix}
 =0.
\end{equation}
Equations~\eqref{eq:q-definition} and \eqref{eq:two-channel} form the complete master equation for Bach perturbations.
For $Q=\Xi=0$, the system reduces to $\mathcal{R}_2\Psi=0$, so the usual Regge-Wheeler perturbations are embedded in the Bach solution space.
A nonzero reduced cosmological constant $\hat{\Lambda}$ couples the two radial branches through the $M\hat{\Lambda}/\rho^3$ term in the lower-left entry of the matrix.

Let $\Phi=(\Psi,\Xi)^{\rm T}$ and denote the radial Bach operator acting on $\Phi$ by $\mathcal{L}_{\rm B}(\omega)$.
We next construct the map between the master variables before and after the conformal transformation.
Using the same coordinates $t$ and tortoise coordinate $r_*$ in the two metrics, the axial perturbation components satisfy $\tilde{h}_0=C^2 h_0$ and $\tilde{h}_1=C^2 h_1$.
Writing $q=C^{\prime}/C$, where $^{\prime}\equiv\dif/\dif\rho$, Eq.~\eqref{eq:master-variables} gives
\begin{equation}\label{eq:conformal-map}
 \tilde{\Phi}=\mathcal{T}_C\Phi,
 \quad
 \mathcal{T}_C=C^2
 \begin{pmatrix}
  1 & 0 \\
  2\rho q & 1
 \end{pmatrix}.
\end{equation}
Since $\det\mathcal{T}_C=C^4$, this lower-triangular map is invertible wherever $C\neq0$.

After arranging the transformed radial Bach equation with the normalization of the seed equation, denote its operator by $\tilde{\mathcal{L}}_{\rm B}$; in Eq.~\eqref{eq:frequency-symmetry-map}, set $\mathcal{T}_s=\mathcal{T}_C$ and $\mathcal{W}_s=1$, so that
\begin{equation}\label{eq:intertwining}
 \tilde{\mathcal{L}}_{\rm B}(\omega)\mathcal{T}_C=\mathcal{L}_{\rm B}(\omega).
\end{equation}
Thus, for general $\hat{\Lambda}$, Eq.~\eqref{eq:conformal-map} maps the complete axial perturbations before and after the conformal transformation, while Eq.~\eqref{eq:intertwining} maps the corresponding Bach equations.
By Theorem~\ref{thm:qnm-spectral-symmetry}, whether the QNM spectra before and after the conformal transformation agree further depends on whether $\mathcal{T}_C$ preserves the corresponding boundary conditions and the normalization of the time coordinate.

\vspace{5pt}
\noindent
{\bf \emph{Schwarzschild Conformal Spectrum.--}}
We now set $\hat{\Lambda}=0$, so the seed metric is the asymptotically flat Schwarzschild metric with $F=1-2M/\rho$.
The QNM boundary conditions are purely ingoing at the event horizon and purely outgoing at spatial infinity.
When $C$ and $C^{-1}$ are smooth and finite in the exterior and at both boundaries, and $\rho q\to0$ at infinity, $\mathcal{T}_C$ is finite and invertible at the horizon and infinity: it changes only the amplitudes of the ingoing and outgoing solutions and does not mix the two boundary behaviors.
If the two metrics use the same normalization of the time coordinate, the conditions of Theorem~\ref{thm:qnm-spectral-symmetry} are satisfied.
It is therefore sufficient to determine the spectrum of the Schwarzschild representative; this gives the common spectrum of all BHs related to it by such conformal transformations.

For the asymptotically flat Schwarzschild metric, Eq.~\eqref{eq:two-channel} becomes
\begin{equation}\label{eq:triangular-system}
 \mathcal{R}_1 \Xi = 2 F Q, \quad \mathcal{R}_2 Q = 0.
\end{equation}
The second equation gives a spin-2 degree of freedom, while the first is a spin-1 equation driven by $Q$.
To separate the driving term, define the differential operators
\begin{align}\label{eq:off-shell-identity}
 \mathcal{D} & = \frac{1}{\omega^2}\left[\frac{2M}{\rho}-\frac{\ell(\ell+1)+1}{3}+(2M-\rho)\partial_\rho
 \right], \nonumber \\
 \mathcal{K} & = \frac{1}{3\rho\omega^2}\Bigl\{3\rho(\rho-2M)\partial_\rho + [\ell(\ell+1)+7]\rho-18M\Bigr\}.
\end{align}
They obey the operator identity $\mathcal{R}_1\mathcal{D}-2F =-\mathcal{K}\mathcal{R}_2$.
With $Z_1=\Xi-\mathcal{D}Q$, Eq.~\eqref{eq:triangular-system} becomes
\begin{equation}\label{eq:decoupled-system}
 \mathcal{R}_2 Q=0, \quad \mathcal{R}_1 Z_1=0.
\end{equation}
The operator $\mathcal{D}$ contains only finitely many derivatives and therefore preserves the QNM boundary conditions for $\omega\neq0$.
On a Ricci-flat background, the six degrees of freedom of pure Weyl gravity in four dimensions are represented by two massless spin-2 degrees of freedom and one massless spin-1 degree of freedom \cite{Deser:2012qg}.
The kernel of $\mathcal{E}$ gives the Regge-Wheeler tensor modes, $Q$ describes the second tensor modes in the non-Einstein sector, and $Z_1$ isolates the vector degree of freedom contained in $\Xi$.
The tidal curvature $M/\rho^3$ of the Schwarzschild spacetime provides Regge-Wheeler and Maxwell-type potential barriers for the tensor and vector degrees of freedom, respectively; for $\hat{\Lambda}=0$, it does not couple the two branches.

Equation~\eqref{eq:decoupled-system} gives the inclusions between the spectra.
For any Bach QNM, both $Q$ and $Z_1$ satisfy the QNM boundary conditions.
If at least one of them is nonzero, its frequency belongs to the spectrum of the corresponding second-order operator; if $Q=Z_1=0$, then $\Xi=0$ and Eq.~\eqref{eq:q-definition} reduces to $\mathcal{R}_2\Psi=0$.
Therefore $\Spec_{\rm QNM}^{\rm Bach}\subseteq \Spec_{\rm QNM}^{\rm RW}\cup\Spec_{\rm QNM}^{s=1}$.
For a physical branch satisfying the Regge-Wheeler equation $\mathcal{R}_2Z_2=0$, we can directly choose $(\Psi,\Xi)_2=(Z_2,0)$.
Likewise, after setting $Q=0$, a solution of $\mathcal{R}_1Z_1=0$ reconstructs the metric perturbation as
\begin{equation}\label{eq:two-lifts}
 (\Psi,\Xi)_1=\left(\frac{1}{3}Z_1,Z_1\right).
\end{equation}
Because both reconstruction maps preserve the QNM boundary conditions, the reverse inclusion holds.
For $\ell\geq2$ and $\omega\neq0$, excluding algebraically special frequencies, the complete QNM spectrum is therefore the union of the two branches,
\begin{equation}\label{eq:spectral-union}
 \Spec_{\rm QNM}^{\rm B} =\Spec_{\rm QNM}^{\rm RW}\cup\Spec_{\rm QNM}^{s=1}.
\end{equation}
Since $\mathcal{T}_C$ preserves these boundary conditions, Eq.~\eqref{eq:spectral-union} applies to all BH solutions related by conformal transformations satisfying the conditions above, namely
\begin{equation}\label{eq:conformal-spectrum}
 \Spec_{\rm QNM}(\tilde{\mathcal{L}}_{\rm B})
 =\Spec_{\rm QNM}(\mathcal{L}_{\rm B})
 =\Spec_{\rm QNM}^{\rm RW}\cup\Spec_{\rm QNM}^{s=1}.
\end{equation}
This shows that the spin-1 branch is excited entirely by metric perturbations and that the QNM spectrum of pure Weyl gravity on a Schwarzschild background contains two independent frequency branches.
A single Regge-Wheeler type equation therefore determines only the tensor branch of the Bach spectrum, not the complete axial spectrum of pure Weyl gravity.

The second tensor degree of freedom added by the theory with higher derivatives does not produce a new set of QNM frequencies; it makes the existing Regge-Wheeler resonances second-order poles.
Let the resolvent of the Bach operator acting on the master variables be $\mathcal{G}_{\rm B}=\mathcal{L}_{\rm B}^{-1}$.
Near a Regge-Wheeler pole $\omega_n$, the metric resolvent has the expansion
\begin{equation}\label{eq:double-pole}
 \mathcal{G}_{\rm B}(\omega) \sim \frac{G_{-2}^{(n)}}{(\omega-\omega_n)^2} + \frac{G_{-1}^{(n)}}{\omega-\omega_n}, \quad\ G_{-2}^{(n)} \neq 0,
\end{equation}
where $G_{-p}^{(n)}$ is the coefficient of the $p$th-order pole of the resolvent at $\omega_n$.
The spin-2 degrees of freedom therefore appear as generalized QNMs, in agreement with the critical structure of gravity defined by the square of the Weyl tensor \cite{Lu:2011ks}.
The contribution of these poles to the time domain perturbation is $h_n(t) \sim (A_n + B_n t)\me^{-\mi \omega_n t}$.
For $\Im\omega_n<0$, the perturbation still decays exponentially, but the linear time factor distinguishes it from an ordinary exponentially damped mode.
The corresponding resolvent relation follows directly from Eq.~\eqref{eq:general-resolvent-map},
\begin{equation}\label{eq:resolvent-map}
 \tilde{\mathcal{G}}_{\rm B}(\omega)=\mathcal{T}_C\mathcal{G}_{\rm B}(\omega).
\end{equation}
Because $\mathcal{T}_C$ and its inverse are nonsingular in the BH exterior and at the boundaries and are independent of $\omega$, the locations and orders of all poles in the Bach spectrum are unchanged; in particular, the second-order poles of the Regge-Wheeler branch are not altered by the conformal transformation.
Appendix~\ref{app:numerical-verification} also solves the Bach perturbation equations and the conformally transformed radial equations directly, providing an independent numerical check of Eqs.~\eqref{eq:spectral-union} and \eqref{eq:conformal-spectrum}.
If the conformal factor diverges or tends to zero at a boundary, or changes the asymptotic boundary structure, the boundary conditions no longer satisfy the requirements of the theorem; the conformal transformation between Schwarzschild-(A)dS and Mannheim-Kazanas metrics is an example.

\vspace{5pt}
\noindent
{\bf \emph{Conclusion and Discussion.--}}
In this letter, we established a general relation between symmetries and QNM spectra of BHs, showing how isospectrality follows from a map between complete QNM boundary value problems.
In pure Weyl gravity, we determined the complete axial spectrum on a Schwarzschild background from the linearized Bach equation.
The spectrum contains the Regge-Wheeler and spin-1 frequency branches, and the repeated spin-2 factor makes the Regge-Wheeler frequencies second order poles.
This spectrum and its pole structure are shared by conformally related BHs that satisfy the theorem.
Einstein gravity and pure Weyl gravity share the same Regge-Wheeler branch on a Schwarzschild background because every solution of the linearized Einstein equation also gives a Bach perturbation.
A common background metric can preserve part of the dynamics, but it does not by itself determine the complete QNM spectrum.

Conformal transformations in pure Weyl gravity are only one example of the theorem.
In Einstein-Maxwell and Einstein-Abelian-Higgs theories, local $U(1)$ transformations relate different gauge representatives of the same charged BH; after pure gauge modes are removed, they give the same physical perturbation space \cite{Gubser:2008px}.
The Einstein-Maxwell action is also invariant under the discrete transformation $A_\mu\to-A_\mu$, which maps a Kerr-Newman solution with charge $Q$ to one with charge $-Q$ and changes the sign of the electromagnetic perturbation \cite{Newman:1965my}.
For a Kerr BH with a complex scalar hair, a global $U(1)$ phase rotation acts simultaneously on the background scalar and its perturbation, so the perturbation map is only a constant phase factor \cite{Herdeiro:2014goa}.
In a linear scalar-Gauss-Bonnet coupling, the shift $\varphi\to\varphi+c$ changes the action only by a Gauss-Bonnet topological term, leaving the field equations and their linearization unchanged; it relates hairy BHs whose asymptotic scalar values differ by $c$ \cite{Sotiriou:2013qea,Sotiriou:2014pfa}.
The action with a quadratic coupling is invariant under the $\mathbb{Z}_2$ transformation $\varphi\to-\varphi$, which relates two scalarized BHs with opposite scalar charges and induces $\delta\varphi\to-\delta\varphi$ at the linearized level \cite{Doneva:2017bvd,Silva:2017uqg}.
These continuous or discrete symmetry transformations relate BH solutions with gauge fields or scalar hair and establish bijections between the complete physical perturbation spaces in the corresponding QNM boundary value domains, while preserving the normalization of the time coordinate.
Each pair of mutually mapped BHs therefore has the same QNM spectrum.

The situation is different for the conformal transformation between a Mannheim-Kazanas solution and a Schwarzschild solution, it establishes only a local map between the Bach equations.
For $\hat{\Lambda}=0$, the conformal factor $C=(1-a\rho)^{-1}$ relating the two metrics diverges at $\rho_{\rm b}=1/a$ and maps spatial infinity of the Mannheim-Kazanas solution to a finite point in the Schwarzschild exterior.
This change of endpoint gives the two QNM boundary value problems different domains, so the local map between the Bach equations is not sufficient to establish isospectrality.
The role of boundary conditions in selecting dynamical sectors of conformal gravity has also been discussed in the literature \cite{Anastasiou:2016jix}.
At the quantum level, a Weyl anomaly may change the linearized field equations, so the map required by the spectral theorem must be reconsidered \cite{Duff:1993wm,Mazur:2001aa}.

The reduction scheme for the Bach equation can also be used for polar perturbations and extended to other theories with higher derivatives.
Related reduction procedures have been carried out by introducing auxiliary fields or by solving directly for the linearized Ricci tensor \cite{Mauro:2015waa,Myung:2018ete}.
In general, one selects tensor variables of lower differential order and with definite transformation properties from the linearized field equations or curvature tensors, and then performs a spherical-harmonic decomposition using the symmetries of the background.
After the Bianchi identities remove dependent components, if the curvature couplings and covariant derivatives generate no new independent tensor components, the original higher derivative equations can be organized into a finite set of radial equations of lower order.
Solving these equations and reconstructing the metric perturbations from the definitions of the intermediate variables retains the perturbations already present in the lower-order theory, the additional degrees of freedom introduced by higher derivatives, and their QNM boundary conditions.
For example, in nondegenerate $f(R)$ gravity on a Schwarzschild background, we can choose $X=\delta R$ as the intermediate variable; 
it satisfies $(\bar\nabla^2-m_0^2)X=0$, which gives the massive spin-0 branch. When reconstructing the metric perturbation, the massless spin-2 homogeneous solution must also be retained, so the complete spectrum is the union of the two branches \cite{Myung:2011ih}.
For the more general $\mathcal{L}_{\rm quad}=R+\alpha R^2+\beta R_{\mu\nu}R^{\mu\nu}$, the traceless Ricci variable $Y_{\mu\nu}$ must also be introduced; $(\bar\Delta_{\rm L}+m_2^2)Y_{\mu\nu}=0$ gives an additional massive spin-2 branch, so the complete spectrum contains the two branches above together with this branch \cite{Antoniou:2024jku}.

\vspace{5pt}
\noindent
{\bf \emph{Acknowledgments.--}}
This work is supported by the National Natural Science Foundation of China No.~12475067 and No.~12235019. Moreover, C.Lan is supported by Yantai University under Grant No.~WL22B224.

\appendix

\section{Derivation of the linearized Bach equation}\label{app:linearized-bach}

This appendix derives Eq.~\eqref{eq:linearized-bach-image}.
An overbar denotes a quantity evaluated on the Einstein background.
With the conventions of Eq.~\eqref{eq:einstein-representative},
\begin{equation}\label{app:eq:einstein-background}
 \bar{R}_{\mu\nu}=-\hat\Lambda\bar{g}_{\mu\nu},\quad \bar{P}_{\mu\nu}=-\frac{\hat\Lambda}{6}\bar{g}_{\mu\nu}, \quad \bar{\nabla}_\alpha\bar{P}_{\mu\nu}=0.
\end{equation}
The last relation removes derivatives of the background Schouten tensor and prevents the associated variations of the connection from generating new independent tensor structures.

Axial perturbations are traceless and do not excite the even parity curvature scalar:
\begin{equation}\label{app:eq:axial-trace}
 h^\mu{}_{\mu}=0, \quad \delta R=0.
\end{equation}
Using $P_{\mu\nu}=(R_{\mu\nu}-Rg_{\mu\nu}/6)/2$ and $\mathcal{E}_{\mu\nu}=\delta G_{\mu\nu}-\hat\Lambda h_{\mu\nu}$, the Schouten perturbation becomes
\begin{equation}\label{app:eq:delta-p-e}
 \delta P_{\mu\nu}+\frac{\hat\Lambda}{6}h_{\mu\nu}=\frac{1}{2}\mathcal{E}_{\mu\nu}.
\end{equation}
This relation collects the fourth derivatives of the metric into the second order tensor $\mathcal{E}_{\mu\nu}$.

We first treat the derivative terms.
The identity for the variation of the connection,
$\bar{g}_{\nu\lambda}\delta\Gamma^\lambda_{\alpha\mu}
 +\bar{g}_{\mu\lambda}\delta\Gamma^\lambda_{\alpha\nu}
 =\bar{\nabla}_\alpha h_{\mu\nu}$ and Eq.~\eqref{app:eq:einstein-background} give
\begin{equation}\label{app:eq:nabla-p}
 \delta(\nabla_\alpha P_{\mu\nu})=\frac{1}{2}\bar{\nabla}_\alpha\mathcal{E}_{\mu\nu}.
\end{equation}
Write the Bach tensor in terms of the Schouten tensor as
\begin{equation}\label{app:eq:bach-schouten}
 \mathcal B_{\mu\nu}
 =\nabla^2P_{\mu\nu}-\nabla^\rho\nabla_\mu P_{\rho\nu}
 +P^{\rho\sigma}C_{\mu\rho\nu\sigma}.
\end{equation}
Since $\bar{\nabla}P=0$, variations of the outer inverse metric and connection multiply vanishing background derivatives.
Equation~\eqref{app:eq:nabla-p} then yields
\begin{align}
 \delta(\nabla^2P_{\mu\nu})
 &=\frac{1}{2}\bar{\nabla}^2\mathcal{E}_{\mu\nu},\nonumber\\
 \delta(\nabla^\rho\nabla_\mu P_{\rho\nu})
 &=\frac{1}{2}\bar{\nabla}^\rho
   \bar{\nabla}_\mu\mathcal{E}_{\rho\nu}.
 \label{app:eq:derivative-variations}
\end{align}

The last term in Eq.~\eqref{app:eq:bach-schouten} requires variations of both the raised Schouten tensor and the Weyl tensor.
Varying $g^{\rho\sigma}C_{\mu\rho\nu\sigma}=0$ and using Eq.~\eqref{app:eq:delta-p-e} gives
\begin{align}
 \bar{g}^{\rho\sigma}\delta C_{\mu\rho\nu\sigma}
 &=h^{\rho\sigma}\bar{C}_{\mu\rho\nu\sigma},\nonumber\\
 \delta P^{\rho\sigma}
 &=\bar{g}^{\rho\alpha}\bar{g}^{\sigma\beta}
   \delta P_{\alpha\beta}
   +\frac{\hat\Lambda}{3}h^{\rho\sigma}.
 \label{app:eq:raised-variations}
\end{align}
The metric perturbation in the first line combines exactly with the part in which the background Schouten tensor multiplies $\delta C_{\mu\rho\nu\sigma}$.
The curvature coupling reduces to
\begin{equation}\label{app:eq:weyl-coupling-variation}
 \delta(P^{\rho\sigma}C_{\mu\rho\nu\sigma})
 =\frac{1}{2}\bar{C}_{\mu\rho\nu\sigma}
  \mathcal{E}^{\rho\sigma}.
\end{equation}
Substituting Eqs.~\eqref{app:eq:derivative-variations} and \eqref{app:eq:weyl-coupling-variation} into Eq.~\eqref{app:eq:bach-schouten} gives
\begin{equation}\label{app:eq:bach-precommuted}
 \delta\mathcal B_{\mu\nu}
 =\frac{1}{2}\left(
 \bar{\nabla}^2\mathcal{E}_{\mu\nu}
 -\bar{\nabla}^\rho\bar{\nabla}_\mu\mathcal{E}_{\rho\nu}
 +\bar{C}_{\mu\rho\nu\sigma}\mathcal{E}^{\rho\sigma}
 \right).
\end{equation}

The linearized Bianchi identity and tracelessness remove the mixed derivative:
\begin{equation}\label{app:eq:e-constraints}
 \bar{\nabla}^\rho\mathcal{E}_{\rho\nu}=0,
 \qquad \mathcal{E}^\rho{}_{\rho}=0.
\end{equation}
On an Einstein background in four dimensions, the Riemann tensor splits into its Weyl and constant curvature parts, so
\begin{equation}\label{app:eq:commutator}
 \bar{\nabla}^\rho\bar{\nabla}_\mu\mathcal{E}_{\rho\nu}
 =-\bar{C}_{\mu\rho\nu\sigma}\mathcal{E}^{\rho\sigma}
  -\frac{4\hat\Lambda}{3}\mathcal{E}_{\mu\nu}.
\end{equation}
Substituting Eq.~\eqref{app:eq:commutator} into Eq.~\eqref{app:eq:bach-precommuted} gives
\begin{equation}
 \delta\mathcal B_{\mu\nu}
 =\frac{1}{2}\left[
 \left(\bar{\nabla}^2+\frac{4\hat\Lambda}{3}\right)\mathcal{E}_{\mu\nu}
 +2\bar{C}_{\mu\rho\nu\sigma}\mathcal{E}^{\rho\sigma}
 \right],
\end{equation}
which is Eq.~\eqref{eq:linearized-bach-image}.
The derivation uses only an Einstein background in four dimensions, the tracelessness of axial perturbations, and the linearized Bianchi identity.
For a background that does not satisfy the Einstein equations, $\bar{\nabla}_\alpha\bar{P}_{\mu\nu}$ is generally nonzero, Eq.~\eqref{app:eq:nabla-p} no longer holds, and additional terms from the variation of the connection enter the linearized Bach tensor.

\section{Numerical verification of the quasinormal frequencies}\label{app:numerical-verification}

This appendix tests the QNM spectrum derived in the main text.
We first solve the fourth order Bach equations directly in the variables $(\Psi,\Xi)$ with Leaver's method \cite{Leaver:1985ax}.
We then apply a regular conformal transformation and use a shooting method to verify conformal invariance.
Both calculations set $M=1$.
Reference frequencies are obtained independently by applying Leaver's method to the RW equation and the vector perturbation equation for a Schwarzschild BH.

For the Leaver calculation, let $x=1-2M/\rho$, so the event horizon and spatial infinity lie at $x=0$ and $x=1$, respectively.
Define
\begin{equation}\label{app:eq:leaver-prefactor}
 P_\omega=\exp\left(\frac{2\mi M\omega}{1-x}\right)x^{-2\mi M\omega}(1-x)^{-2\mi M\omega}.
\end{equation}
This factor gives $\me^{-\mi\omega r_*}$ at the horizon and $\me^{\mi\omega r_*}$ at infinity.
We expand the remaining parts about $x=0$:
\begin{equation}\label{app:eq:direct-leaver-series}
 \Psi=P_\omega \rho \sum_{n=0}^{\infty}\psi_nx^n,
 \qquad
 \Xi=P_\omega \rho \sum_{n=0}^{\infty}\xi_nx^n.
\end{equation}
Substituting Eq.~\eqref{app:eq:direct-leaver-series} directly into $B_1[\Psi,\Xi]=B_2[\Psi,\Xi]=0$ and collecting powers of $x$ gives a recurrence with seven terms,
\begin{equation}\label{app:eq:seven-band}
 \sum_{j=-4}^{2}\mathbf M_{n,j}(\omega)
 \begin{pmatrix}\psi_{n+j}\\ \xi_{n+j}\end{pmatrix}=0.
\end{equation}
At the horizon, the ingoing condition removes the divergent branches, and forward recurrence from the three remaining initial data produces a three dimensional ingoing solution space.
At infinity, the outgoing condition is equivalent to convergence of the series.
We select the three decaying branches from the large-$n$ asymptotic eigenvalues and recurse backward from a sufficiently high truncation order.
Matching the two three dimensional bases at an intermediate order gives a $6\times6$ Casoratian matrix $\mathbf C(\omega)$.
The Bach frequencies solve
\begin{equation}\label{app:eq:leaver-root}
 \Re\left[\det\mathbf{C}(\omega)\right]=0,
 \quad \Im\left[\det\mathbf{C}(\omega)\right]=0,
\end{equation}
which gives the QNM frequencies of the Bach boundary value problem.

For a direct numerical test of conformal invariance, we choose
\begin{equation}\label{app:eq:numerical-conformal-factor}
 C^2(\rho)=\left(1+\frac{L^2}{\rho^2}\right)^2, \quad L=M=1.
\end{equation}
For the transformed version of Eq.~\eqref{eq:decoupled-system}, define $Y_1=C^2Z_1$ and $Y_2=C^2Q$.
They satisfy
\begin{equation}\label{app:eq:shooting-ode}
 Y_s^{\prime\prime}+\left(\frac{F^{\prime}}{F}-4q\right)Y_s^{\prime} + A_s(\rho,\omega)Y_s=0,\quad s=1,\ 2,
\end{equation}
where
\begin{align}
 A_s={}&\frac{\omega^2}{F^2}-\frac{W_s}{F}
 +4q^2-2q^{\prime}-2\frac{F^{\prime}}{F}q,\nonumber\\
 W_1={}&\frac{\ell(\ell+1)}{\rho^2},\qquad
 W_2=\frac{\ell(\ell+1)}{\rho^2}-\frac{6M}{\rho^3}.
 \label{app:eq:shooting-coefficients}
\end{align}
We treat $\omega$ as an eigenvalue and solve these equations by numerical integration.

Let $z=\rho-2M$.
The ingoing expansion at the horizon and the outgoing expansion at infinity are
\begin{align}\label{app:eq:shooting-series}
 Y_s^{\rm H} & = z^{-2\mi M\omega}\sum_{k=0}^{N_{\rm H}}a_k z^k,\nonumber\\
 Y_s^\infty & = \me^{\mi\omega\rho}\rho^{2\mi M\omega} \sum_{k=0}^{N_\infty}\frac{b_k}{\rho^k}.
\end{align}
After setting $a_0=b_0=1$, Eq.~\eqref{app:eq:shooting-ode} determines all remaining coefficients order by order.
Because $C^2$ is regular throughout the exterior, one may equivalently construct the series for $Z_1$ or $Q$ first and then multiply by $C^2$.
We use $N_{\rm H}=60$ and $N_\infty=12$.
The solutions are integrated along $\rho=2M+\me^{\mi\theta}\lambda$ from $\lambda_{\rm H}=0.5M$ and $\lambda_\infty=25M$ to $\lambda_{\rm m}=5M$, where $\lambda\in\mathbb R$.
The path angle is fixed at $\theta=-\arg\omega_{\rm ref}+0.50$ during each root search.
This complex contour improves the stability of the integration.

At the matching point, define
$\mathbf{y}_{\rm H}=(Y_s^{\rm H},\partial_\rho Y_s^{\rm H})^{\rm T}$ and
$\mathbf{y}_\infty=(Y_s^\infty,\partial_\rho Y_s^\infty)^{\rm T}$.
The normalized shooting function is
\begin{equation}\label{app:eq:shooting-determinant}
 D_s(\omega)= \frac{\det(\mathbf{y}_{\rm H},\mathbf{y}_\infty)}{\lVert\mathbf{y}_{\rm H}\rVert\lVert\mathbf{y}_\infty\rVert}.
\end{equation}
The condition $D_s=0$ makes the two solutions linearly dependent at the matching point and hence imposes one QNM boundary value problem.
Starting near a reference frequency $\omega_{\rm ref}$, we solve $\Re D_s=0$ and $\Im D_s=0$ simultaneously.

Table~\ref{tab:numerical-qnms} lists the frequencies for $n=0,\cdots,5$.
We define $\Delta\omega=\omega_{\rm num}-\omega_{\rm ref}$, and use L and S for the Leaver calculation and the shooting calculation, respectively.
For the RW and spin-1 branches, the largest Leaver deviations are $1.71\times10^{-5}$ and $2.61\times10^{-9}$.
The corresponding shooting deviations are $1.26\times10^{-12}$ and $3.23\times10^{-11}$.
These calculations directly recover the QNM spectrum of the fourth order Bach equations and verify its invariance under the regular conformal transformation.

\begin{table*}[!htb]
\centering
\caption{Direct solutions of the Bach equations with Leaver's method (L), and shooting results (S) testing the invariance of the QNM spectrum under a conformal transformation.
Frequencies are given as $M\omega$.
The reference values are obtained with Leaver's method from the RW equation and the massless vector perturbation equation for a Schwarzschild BH.}
\label{tab:numerical-qnms}
\begin{ruledtabular}
\begin{tabular}{cccccc}
$n$ & Method & $M\omega_{\rm RW}$ & $|M\Delta\omega_{\rm RW}|$
& $M\omega_{s=1}$ & $|M\Delta\omega_{s=1}|$\\
\hline
$0$ & L & $0.373671955611070-0.088962315897942\mi$ & $2.71\times10^{-7}$ & $0.457595511629985-0.095004425819232\mi$ & $2.74\times10^{-13}$\\
    & S & $0.373671684418085-0.088962315688942\mi$ & $4.35\times10^{-14}$ & $0.457595511629847-0.095004425819536\mi$ & $6.46\times10^{-14}$\\
$1$ & L & $0.346711214132316-0.273915011678965\mi$ & $2.57\times10^{-7}$ & $0.436542385745059-0.290710143117316\mi$ & $6.27\times10^{-12}$\\
    & S & $0.346710996879198-0.273914875291250\mi$ & $3.81\times10^{-14}$ & $0.436542385750550-0.290710143120387\mi$ & $2.54\times10^{-14}$\\
$2$ & L & $0.301057020851047-0.478273286037883\mi$ & $5.14\times10^{-6}$ & $0.401186733757459-0.501587346526890\mi$ & $2.44\times10^{-10}$\\
    & S & $0.301053454612418-0.478276983223056\mi$ & $1.14\times10^{-13}$ & $0.401186733916477-0.501587346341646\mi$ & $4.55\times10^{-14}$\\
$3$ & L & $0.251518955320228-0.705142753397113\mi$ & $1.50\times10^{-5}$ & $0.362595033663811-0.730198515318030\mi$ & $1.69\times10^{-9}$\\
    & S & $0.251504962185611-0.705148202433424\mi$ & $4.35\times10^{-13}$ & $0.362595032331488-0.730198514286318\mi$ & $3.28\times10^{-13}$\\
$4$ & L & $0.207531645161141-0.946845893746450\mi$ & $1.71\times10^{-5}$ & $0.328736672697582-0.971609376856302\mi$ & $2.61\times10^{-9}$\\
    & S & $0.207514579813023-0.946844890866271\mi$ & $8.48\times10^{-13}$ & $0.328736671108326-0.971609378932334\mi$ & $2.87\times10^{-11}$\\
$5$ & L & $0.169307383986975-1.195614451864460\mi$ & $1.02\times10^{-5}$ & $0.301492994421637-1.219715248781730\mi$ & $1.63\times10^{-9}$\\
    & S & $0.169299403092933-1.195608054135797\mi$ & $1.26\times10^{-12}$ & $0.301492995734610-1.219715249742210\mi$ & $3.23\times10^{-11}$\\
\end{tabular}
\end{ruledtabular}
\end{table*}

\bibliographystyle{apsrev4-1}
\bibliography{references}

\clearpage
\onecolumngrid
\begingroup
\setcounter{equation}{0}
\renewcommand{\theequation}{S\arabic{equation}}
\renewcommand{\theHequation}{S\arabic{equation}}

\section*{Supplemental Material}
\noindent
{\bf \emph{Brans-Dicke Theory.--}}
The main text proves conformal invariance of the axial QNM spectrum in pure Weyl gravity through an algebraic relation between the linearized Bach operators.
For comparison, we consider a Brans-Dicke theory with local Weyl symmetry and prove the conformal invariance of its QNM spectrum directly from the full action.
Here a conformal transformation acts on both the metric and the compensating scalar as a local gauge redundancy.
A suitable combination of the two fields gives conformally invariant master variables, so the second-order Einstein perturbation equations match term by term before and after the transformation.

Consider the action
\begin{equation}\label{supp:eq:action}
 S_{\rm CE}=\int\dif^4x\sqrt{-g}\left[
 \phi^2R+6g^{\mu\nu}(\nabla_\mu\phi)(\nabla_\nu\phi)
 \right].
\end{equation}
The field $\phi$ is a Weyl compensator.
If matter is included, its action must transform with the corresponding conformal weights.
We restrict the discussion to vacuum.
Under
\begin{equation}\label{supp:eq:weyl-transform}
 g_{\mu\nu}\rightarrow C^2(x)g_{\mu\nu}, \quad \phi\rightarrow C^{-1}(x)\phi
\end{equation}
Eq.~\eqref{supp:eq:action} changes at most by a boundary term.
Variations with respect to $g_{\mu\nu}$ and $\phi$ give
\begin{align}
 \mathscr{E}_{\mu\nu}={}&\phi^2G_{\mu\nu}
 +(g_{\mu\nu}\Box-\nabla_\mu\nabla_\nu)\phi^2 + 6\left[\nabla_\mu\phi\nabla_\nu\phi
 -\frac{1}{2}g_{\mu\nu}(\nabla\phi)^2\right]=0,
 \label{supp:eq:metric-eom}\\
 \mathscr{E}_\phi={}&\phi R-6\Box\phi=0.
 \label{supp:eq:compensator-eom}
\end{align}
Taking the trace of Eq.~\eqref{supp:eq:metric-eom} verifies
\begin{equation}\label{supp:eq:noether-identity}
 g^{\mu\nu}\mathscr{E}_{\mu\nu}=-\phi\mathscr{E}_\phi.
\end{equation}
Thus, for $\phi\neq0$, the scalar equation follows from the trace of the Einstein equation.

Although the compensator has derivative terms resembling a kinetic term in Eq.~\eqref{supp:eq:action}, it does not add a physical scalar mode.
Its perturbation and the conformal trace perturbation of the metric parametrize the same local Weyl gauge freedom.
This is also clear at the level of the action.
For any nonzero constant $\phi_0$, define
\begin{equation}\label{supp:eq:invariant-metric}
 \gamma_{\mu\nu}=\left(\frac{\phi}{\phi_0}\right)^2g_{\mu\nu}.
\end{equation}
The metric $\gamma_{\mu\nu}$ is invariant under Eq.~\eqref{supp:eq:weyl-transform}.
Using Eq.~\eqref{supp:eq:invariant-metric} as a field redefinition and applying the conformal transformation of the Ricci scalar in four dimensions gives
\begin{equation}\label{supp:eq:action-identity}
 \sqrt{-g}\left[\phi^2R+6(\nabla\phi)^2\right]
 =\phi_0^2\sqrt{-\gamma}\,R[\gamma]
 +6\partial_\mu\!\left(\sqrt{-g}\,\phi\nabla^\mu\phi\right).
\end{equation}
After dropping the surface term, Eq.~\eqref{supp:eq:action} is the Einstein-Hilbert action written in terms of $\gamma_{\mu\nu}$.
For $\phi\neq0$, one may choose the Einstein gauge $\phi=\phi_0$.
In any other gauge, $g_{\mu\nu}$ and $\phi$ are simply different parametrizations of the same invariant metric $\gamma_{\mu\nu}$.

Start from the static, spherically symmetric seed metric
\begin{equation}\label{supp:eq:seed}
 \dif\hat s^2=-f(r)\dif t^2+\frac{\dif r^2}{f(r)}
 +r^2\dif\Omega_2^2,
 \qquad \hat\phi=\phi_0.
\end{equation}
For any positive function $C^2(r)$, set
\begin{equation}\label{supp:eq:background}
 g_{\mu\nu}=C^2(r)\hat g_{\mu\nu}, \quad \phi=C^{-1}(r)\phi_0.
\end{equation}
If $\hat g_{\mu\nu}$ obeys the vacuum Einstein equation, then $(g_{\mu\nu},\phi)$ automatically satisfies Eqs.~\eqref{supp:eq:metric-eom} and \eqref{supp:eq:compensator-eom}, while Eq.~\eqref{supp:eq:invariant-metric} still gives $\gamma_{\mu\nu}=\hat g_{\mu\nu}$.
The Einstein tensor of $g_{\mu\nu}$ alone is generally nonzero.
The compensator derivatives in Eq.~\eqref{supp:eq:metric-eom} cancel these extra terms exactly, leaving the full field configuration as a vacuum solution.
If the equation is written in the form of an ordinary Einstein equation, these terms may be moved into an effective source,
\begin{equation}\label{supp:eq:effective-source}
 G_{\mu\nu}[g]=-\frac{1}{\phi^2}\left\{
 (g_{\mu\nu}\Box-\nabla_\mu\nabla_\nu)\phi^2
 +6\left[\nabla_\mu\phi\nabla_\nu\phi
 -\frac{1}{2}g_{\mu\nu}(\nabla\phi)^2\right]\right\}
 \equiv T_{\mu\nu}^{(\phi)}.
\end{equation}
The tensor $T_{\mu\nu}^{(\phi)}$ is not the stress tensor of an added scalar matter field.
It changes with the Weyl gauge and vanishes in the Einstein gauge.
The geometry invariant under the gauge transformation is still $\gamma_{\mu\nu}$ defined by Eq.~\eqref{supp:eq:invariant-metric}.

As an example, a conformally regular Schwarzschild BH metric may be defined by
\begin{equation}\label{supp:eq:regular-background}
 f(r)=1-\frac{2M}{r},
 \qquad
 C^2(r)=\left(1+\frac{L^2}{r^2}\right)^{2N},
 \qquad N\in\mathbb N^+.
\end{equation}
The conformal factor diverges at $r=0$ and makes the curvature invariants of $g_{\mu\nu}$ finite.
At the same point the compensator vanishes, so the Einstein gauge degenerates at the center.
Our QNM problem is defined only in the exterior region.
For Eq.~\eqref{supp:eq:regular-background}, both $C$ and $C^{-1}$ are smooth and finite for $r\geq2M$, and $C\to1$ at infinity.

We first test conformal invariance of the perturbation operator with a massless scalar field $\varphi$ satisfying
\begin{equation}\label{supp:eq:probe-scalar}
 \left(\Box-\frac{R}{6}\right)\varphi=0,
 \quad
 \varphi\rightarrow C^{-1}\varphi.
\end{equation}
Let $\mathcal{P}_g=\Box_g-R[g]/6$.
For $g_{\mu\nu}=C^2\hat g_{\mu\nu}$,
\begin{equation}\label{supp:eq:scalar-covariance}
 \mathcal{P}_g\left(C^{-1}\hat\varphi\right) = C^{-3}\mathcal{P}_{\hat g}\hat\varphi.
\end{equation}
The scalar field may therefore be separated as
\begin{equation}\label{supp:eq:scalar-ansatz}
 \varphi=\frac{\me^{-\mi\omega t}}{rC(r)}
 \Phi(r)Y_{\ell m}(\theta,\phi),
\end{equation}
where $\dif r_*/\dif r=f^{-1}(r)$.
Below, a prime denotes $\dif/\dif r$.
Substitution into Eq.~\eqref{supp:eq:probe-scalar} gives
\begin{equation}\label{supp:eq:scalar-wave}
 \frac{\dif^2\Phi}{\dif r_*^2} + \left[\omega^2-V_0(r)\right]\Phi=0,
\end{equation}
with
\begin{equation}\label{supp:eq:scalar-potential}
 V_0=f\left[\frac{\ell(\ell+1)}{r^2}-\frac{f^{\prime\prime}}{6}+\frac{1-f+rf^{\prime}}{3r^2}\right].
\end{equation}
Neither the conformal factor nor its derivatives appear in Eq.~\eqref{supp:eq:scalar-wave}.
For a Schwarzschild BH background,
\begin{equation}\label{supp:eq:schwarzschild-scalar-potential}
	V_0=f\left[\frac{\ell(\ell+1)}{r^2}+\frac{2M}{r^3}\right].
\end{equation}
With the same boundary conditions, $\Phi$ therefore has the same QNM frequencies as a conformally coupled scalar on a Schwarzschild BH.

We next consider axial gravitational perturbations.
In the RW gauge,
\begin{equation}\label{supp:eq:axial-ansatz}
 h_{tA}=\me^{-\mi\omega t}h_0(r)S_A^{\ell m},
 \quad
 h_{rA}=\me^{-\mi\omega t}h_1(r)S_A^{\ell m}.
\end{equation}
The odd parity sector has no scalar harmonic component, so $\delta\phi=0$.
Equation~\eqref{supp:eq:invariant-metric} gives
\begin{equation}\label{supp:eq:invariant-perturbation}
 \delta\gamma_{\mu\nu}=C^{-2}h_{\mu\nu}.
\end{equation}
Thus $C^{-2}h_0$ and $C^{-2}h_1$ are precisely the axial perturbations of the seed metric.
A conformally invariant master variable is
\begin{equation}\label{supp:eq:master}
 Q(r)=\frac{f(r)h_1(r)}{rC^2(r)}.
\end{equation}
Substituting Eq.~\eqref{supp:eq:axial-ansatz} into the linearized Eq.~\eqref{supp:eq:metric-eom}, one component of $\delta\mathscr{E}_{\mu\nu}$ gives
\begin{equation}\label{supp:eq:constraint}
 \mi\omega C^2 h_0+f^2C^2 h_1'
 +fh_1\left[C^2f'-f\left(C^2\right)'\right]=0.
\end{equation}
Equation~\eqref{supp:eq:master} then reconstructs both metric components:
\begin{equation}\label{supp:eq:reconstruction}
 h_1=\frac{rC^2}{f}Q,
\qquad
 h_0=\frac{\mi fC^2}{\omega}\frac{\dif(rQ)}{\dif r}.
\end{equation}
Substituting Eq.~\eqref{supp:eq:reconstruction} into the remaining independent axial equation cancels every term involving $C^2$, $\left(C^2\right)'$, and $\left(C^2\right)''$, leaving
\begin{equation}\label{supp:eq:gravitational-wave}
 \frac{\dif^2 Q}{\dif r_*^2}+\left[\omega^2-V_2(r)\right]Q=0,
\end{equation}
where
\begin{equation}\label{supp:eq:general-gravitational-potential}
 V_2=f\left[\frac{\ell(\ell+1)-2}{r^2}+\frac{2f}{r^2} + \frac{f^{\prime}}{r}+f^{\prime\prime}\right].
\end{equation}
For the Schwarzschild BH metric function in Eq.~\eqref{supp:eq:regular-background},
\begin{equation}\label{supp:eq:rw-potential}
 V_2=f\left[
 \frac{\ell(\ell+1)}{r^2}-\frac{6M}{r^3}
 \right].
\end{equation}
Equation~\eqref{supp:eq:gravitational-wave} is therefore the standard RW equation.
The cancellation of the conformal factor follows from transforming the metric and compensator together.
Equation~\eqref{supp:eq:invariant-perturbation} shows the same result directly: if $\hat h_1$ is an axial perturbation of the seed metric, then $h_1=C^2\hat h_1$, and Eq.~\eqref{supp:eq:master} gives $Q=f\hat h_1/r$.

Identical master equations alone do not guarantee identical QNM spectra.
Because the conformal factor $C^2$ multiplies both $g_{tt}$ and $g_{rr}$, radial null geodesics still obey $\dif t=\pm\dif r/f$, and the tortoise coordinate $r_*$ is unchanged.
If $C$ and $C^{-1}$ are smooth and finite at the horizon and infinity, Eqs.~\eqref{supp:eq:scalar-ansatz} and \eqref{supp:eq:reconstruction} change only the perturbation amplitudes and do not mix $\me^{-\mi\omega r_*}$ with $\me^{\mi\omega r_*}$.
For $u=\Phi$ or $Q$, the boundary conditions are therefore
\begin{equation}\label{supp:eq:qnm-boundaries}
 u\sim\me^{-\mi\omega r_*}\ (r\to r_{\rm h}), \quad u\sim\me^{\mi\omega r_*}\ (r\to\infty).
\end{equation}
With the same normalization of the time coordinate, the ingoing horizon condition and outgoing infinity condition map one to one between the two solution spaces.
The conformally coupled scalar and the axial gravitational perturbations of the Brans-Dicke theory therefore have the same QNM spectra as their Schwarzschild BH counterparts.
If only the metric is transformed while the compensator is fixed, or if the test scalar is minimally coupled, the radial potential generally depends on $C^2$ because the chosen field equation is not conformally invariant.
If the conformal factor vanishes or diverges at a boundary, the perturbation equation may remain conformally invariant while the boundary conditions do not.
Conformal symmetry alone then gives no spectral equivalence.

This conclusion also follows directly from Theorem~\ref{thm:qnm-spectral-symmetry}.
Let $T_C$ and $W_C$ denote the conformal map of all perturbation fields and the corresponding weight on the field equations, respectively.
The boundary analysis above shows that $T_C$ is a bijection between the domains of the two boundary value problems.
Here $T_C$ and $W_C$ are independent of frequency; when $W_C$ is invertible on the corresponding field equation space, Eq.~\eqref{eq:general-resolvent-map} immediately ensures that the QNM frequencies and the corresponding pole orders are unchanged.

The proof in this Supplemental Material differs from that in the main text.
Here Eq.~\eqref{supp:eq:invariant-metric} first reduces the Brans-Dicke theory to Einstein gravity for the gauge invariant metric $\gamma_{\mu\nu}$, after which conformally invariant master variables give the scalar and Regge-Wheeler equations on a Schwarzschild background.
Pure Weyl gravity in the main text contains an additional non-Einstein gravitational sector.
Those degrees of freedom cannot be removed by choosing $\gamma_{\mu\nu}$, so an invertible map must instead be constructed between the two complete Bach perturbation operators.
Although the two proofs use different dynamical variables, their physical content is the same: conformal symmetry at the level of the action becomes conformal invariance of the QNM spectrum only when the conformal transformation acts on all fields and preserves the domain of the boundary value problem.

\vspace{5pt}
\noindent
{\bf \emph{Polar Perturbations.--}}
The axial reduction in the main text uses the identity $\delta R=0$.
For polar perturbations, $\delta R$ is generally nonzero, so the trace part of the Schouten tensor must be retained before performing the spherical harmonic decomposition.
Since $\bar{G}_{\mu\nu}=\hat\Lambda\bar{g}_{\mu\nu}$,
\begin{equation}\label{supp:eq:polar-e-trace}
 \mathcal{E}\equiv\bar{g}^{\mu\nu}\mathcal{E}_{\mu\nu}=-\delta R.
\end{equation}
The linearized Bianchi identity still gives $\bar{\nabla}^{\mu}\mathcal{E}_{\mu\nu}=0$.
To retain the trace of the Schouten tensor, define
\begin{equation}\label{supp:eq:polar-x-e}
 X_{\mu\nu}\equiv\delta P_{\mu\nu}+\frac{\hat\Lambda}{6}h_{\mu\nu}
 =\frac{1}{2}\mathcal{E}_{\mu\nu}-\frac{1}{6}\bar{g}_{\mu\nu}\mathcal{E}
 =\frac{1}{2}\mathcal{E}_{\mu\nu}+\frac{1}{6}\bar{g}_{\mu\nu}\delta R.
\end{equation}
Equation~\eqref{supp:eq:polar-e-trace} and the linearized Bianchi identity imply
\begin{equation}\label{supp:eq:polar-x-constraints}
 X\equiv\bar{g}^{\mu\nu}X_{\mu\nu}
 =-\frac{1}{6}\mathcal{E}=\frac{1}{6}\delta R,
 \quad
 \bar{\nabla}^{\mu}X_{\mu\nu}=\bar{\nabla}_{\nu}X.
\end{equation}
For $\delta R=0$, Eq.~\eqref{supp:eq:polar-x-e} reduces to Eq.~\eqref{app:eq:delta-p-e}.
For polar perturbations, $\mathcal{E}_{\mu\nu}/2$ in Eqs.~\eqref{app:eq:nabla-p} and \eqref{app:eq:weyl-coupling-variation} must be replaced by $X_{\mu\nu}$.
Equation~\eqref{app:eq:bach-schouten} therefore becomes
\begin{equation}\label{supp:eq:polar-bach-x}
 \delta\mathcal{B}_{\mu\nu}
 =\bar{\nabla}^{2}X_{\mu\nu}
 -\bar{\nabla}^{\rho}\bar{\nabla}_{\mu}X_{\rho\nu}
 +\bar{C}_{\mu\rho\nu\sigma}X^{\rho\sigma}.
\end{equation}
Following the axial calculation in Appendix~\ref{app:linearized-bach}, we use Eq.~\eqref{supp:eq:polar-x-constraints} to treat the derivative terms and then Eq.~\eqref{supp:eq:polar-x-e} to express the result in terms of $\mathcal{E}_{\mu\nu}$ and its trace.
The linearized Bach tensor for $\delta R\neq0$ is thus
\begin{equation}\label{supp:eq:polar-bach-e}
 \delta\mathcal{B}_{\mu\nu}=\frac{1}{2}\left[
 \left(\bar{\nabla}^{2}+\frac{4\hat\Lambda}{3}\right)\mathcal{E}_{\mu\nu}
 +2\bar{C}_{\mu\rho\nu\sigma}\mathcal{E}^{\rho\sigma}
 \right]+\frac{1}{6}\left[
 \bar{\nabla}_{\mu}\bar{\nabla}_{\nu}
 -\bar{g}_{\mu\nu}\left(\bar{\nabla}^{2}+\hat\Lambda\right)
 \right]\mathcal{E}.
\end{equation}
The second bracket in Eq.~\eqref{supp:eq:polar-bach-e} is the additional term required in the polar sector.

Pure Weyl gravity also admits the local Weyl transformation
\begin{equation}\label{supp:eq:polar-weyl-metric}
 h_{\mu\nu}\rightarrow h_{\mu\nu}+2\sigma\bar{g}_{\mu\nu}.
\end{equation}
With the conventions of Eq.~\eqref{app:eq:einstein-background}, the scalar curvature perturbation transforms as
\begin{equation}\label{supp:eq:polar-weyl-trace}
 \delta R \rightarrow
 \delta R - 6\left(\bar{\nabla}^{2}-\frac{4\hat\Lambda}{3}\right)\sigma.
\end{equation}
The Weyl gauge $\delta R=0$ can therefore be fixed by choosing $\sigma$ to satisfy
\begin{equation}\label{supp:eq:polar-weyl-gauge}
 \left(\bar{\nabla}^{2}-\frac{4\hat\Lambda}{3}\right)\sigma
 =\frac{1}{6}\delta R.
\end{equation}
Then $\mathcal{E}=0$, and Eq.~\eqref{supp:eq:polar-bach-e} reduces to Eq.~\eqref{eq:linearized-bach-image} in the main text.
For the purely axial perturbations considered there, this Weyl gauge choice is simpler.
The term $2\sigma\bar{g}_{\mu\nu}$ has even parity and generates no axial perturbation components.
The Weyl parameter may therefore be set to $\sigma=0$ within the purely axial sector.
Axial perturbations already satisfy $\delta R=0$, so no nontrivial Weyl transformation is required to impose this condition.

Equation~\eqref{supp:eq:polar-weyl-gauge} fixes only the inhomogeneous part of $\sigma$ and leaves a residual gauge freedom that satisfies the homogeneous equation.
If $\sigma_{\rm p}$ is a particular solution, the general solution is
\begin{equation}\label{supp:eq:polar-residual-weyl}
 \sigma=\sigma_{\rm p}+\sigma_{\rm res},
 \quad
 \left(\bar{\nabla}^{2}-\frac{4\hat\Lambda}{3}\right)
 \sigma_{\rm res}=0.
\end{equation}
Equation~\eqref{supp:eq:polar-weyl-trace} shows that the transformation generated by $\sigma_{\rm res}$ preserves $\delta R=0$.
Defining $\epsilon=-2\sigma_{\rm res}$ and using the definition of $\mathcal{E}_{\mu\nu}$ gives
\begin{equation}\label{supp:eq:polar-residual-e}
 \mathcal{E}_{\mu\nu}\rightarrow\mathcal{E}_{\mu\nu}
 +\left(\bar{\nabla}_{\mu}\bar{\nabla}_{\nu}
 -\frac{\hat\Lambda}{3}\bar{g}_{\mu\nu}\right)\epsilon,
 \quad
 \left(\bar{\nabla}^{2}-\frac{4\hat\Lambda}{3}\right)\epsilon=0.
\end{equation}

We now consider polar perturbations with $\ell\geq2$.
For fixed $(\ell,m)$, the polar part of $\mathcal{E}_{\mu\nu}$ contains seven radial amplitudes in the three tensor blocks $\mathcal{E}_{ab}$, $\mathcal{E}_{aA}$, and $\mathcal{E}_{AB}$, where $a,b=t,\rho$.
In the $\delta R=0$ gauge, tracelessness imposes one constraint.
The $\nu=t,\rho$ components and the polar angular component of the Bianchi identity give three independent constraints, while the residual Weyl freedom in Eq.~\eqref{supp:eq:polar-residual-e} removes one further radial amplitude.
The number of polar degrees of freedom in the non-Einstein sector is therefore
\begin{equation}\label{supp:eq:polar-dof-count}
  N_{\mathcal{E}}^{(+)}=7-1-3-1=2.
\end{equation}
This count applies to the sector with $\mathcal{E}_{\mu\nu}\neq0$.
The Einstein degrees of freedom with $\mathcal{E}_{\mu\nu}=0$ are not included in Eq.~\eqref{supp:eq:polar-dof-count}; they remain as homogeneous solutions when $h_{\mu\nu}$ is reconstructed from $\mathcal{E}_{\mu\nu}$.

This counting establishes two independent dynamical variables in the non-Einstein sector but does not identify their physical content.
Using $\mathcal{E}=0$, the linearized vacuum Bach equation in Eq.~\eqref{eq:linearized-bach-image} can be written as
\begin{equation}\label{supp:eq:polar-pm-equation}
 \left(\bar{\nabla}^{2}+\frac{2\hat\Lambda}{3}\right)
 \mathcal{E}_{\mu\nu}
 +2\bar{R}_{\mu\rho\nu\sigma}\mathcal{E}^{\rho\sigma}=0.
\end{equation}
Equation~\eqref{supp:eq:polar-pm-equation} has the same form as the Fierz-Pauli equation for a spin-2 field in the transverse-traceless gauge on an Einstein manifold \cite{Rosen:2020crj}.
The residual Weyl transformation in Eq.~\eqref{supp:eq:polar-residual-e} has the form of the scalar gauge symmetry of this spin-2 system in the transverse-traceless gauge \cite{Deser:2012qg}.
In the limit $\hat{\Lambda}\to0$, the two dynamical variables reduce to a massless spin-2 field and a massless spin-1 field on the Schwarzschild background \cite{Rosen:2020crj}:
\begin{equation}\label{supp:eq:polar-spin-decomposition}
 \mathcal{E}_{\mu\nu}
 =\chi_{\mu\nu}+2\bar{\nabla}_{(\mu}A_{\nu)}.
\end{equation}
The tensor $\chi_{\mu\nu}$ describes the massless spin-2 degree of freedom and satisfies
\begin{equation}\label{supp:eq:polar-spin2-field}
 \bar{\nabla}^{\mu}\chi_{\mu\nu}=0,\quad
 \chi^{\mu}{}_{\mu}=0,\quad
 \bar{\nabla}^{2}\chi_{\mu\nu}
 +2\bar{R}_{\mu\rho\nu\sigma}\chi^{\rho\sigma}=0.
\end{equation}
Taking the trace of Eq.~\eqref{supp:eq:polar-spin-decomposition} and using Eq.~\eqref{supp:eq:polar-spin2-field} gives $\bar{\nabla}^{\mu}A_{\mu}=0$, which is the Lorenz gauge condition.
Substituting Eq.~\eqref{supp:eq:polar-spin-decomposition} into the Bianchi identity $\bar{\nabla}_{\mu}\mathcal{E}^{\mu\nu}=0$ then gives
\begin{equation}\label{supp:eq:polar-vector-wave}
 \bar{\nabla}^{2}A_{\nu}=0.
\end{equation}
Defining $\mathcal{F}_{\mu\nu}=2\bar{\nabla}_{[\mu}A_{\nu]}$, Eq.~\eqref{supp:eq:polar-vector-wave} is equivalent to the source free Maxwell equation
\begin{equation}\label{supp:eq:polar-maxwell-equation}
 \bar{\nabla}_{\mu}\mathcal{F}^{\mu\nu}=0.
\end{equation}
In this limit, Eq.~\eqref{supp:eq:polar-residual-e} reduces to $\mathcal{E}_{\mu\nu}\rightarrow\mathcal{E}_{\mu\nu}+\bar{\nabla}_{\mu}\bar{\nabla}_{\nu}\epsilon$.
On the vector part, this residual gauge freedom acts as
\begin{equation}\label{supp:eq:polar-vector-residual-gauge}
 A_{\mu}\rightarrow A_{\mu}
 +\frac{1}{2}\bar{\nabla}_{\mu}\epsilon,
 \quad
 \bar{\nabla}^{2}\epsilon=0.
\end{equation}
This is the residual gauge freedom of the Maxwell field.
We construct the polar spin-1 master variable $Z_1^{(+)}$ from the $t\rho$ component of the field strength:
\begin{equation}\label{supp:eq:polar-spin1-master}
 \mathcal{F}_{t\rho}
 =-\frac{\sqrt{\ell(\ell+1)}}{\rho^2}Z_1^{(+)}
 Y_{\ell m}\me^{-\mi\omega t}.
\end{equation}
The Maxwell equation and the Bianchi identity give
\begin{equation}\label{supp:eq:polar-spin1-radial}
 \left[F\left(Z_1^{(+)}\right)^{\prime}\right]^{\prime}
 +\left(\frac{\omega^2}{F}-\frac{\ell(\ell+1)}{\rho^2}\right)Z_1^{(+)}=0,
\end{equation}
where $^{\prime}$ denotes a derivative with respect to $\rho$.
This is $\mathcal{R}_1Z_1^{(+)}=0$, with $\mathcal{R}_1$ defined in Eq.~\eqref{eq:radial-operators} of the main text.

We next consider the spin-2 part of Eq.~\eqref{supp:eq:polar-spin-decomposition}.
Equation~\eqref{supp:eq:polar-spin2-field} is the massless spin-2 equation on the Schwarzschild background.
Using the $2+2$ decomposition of this background, let $x^a=(t,\rho)$ and $x^A=(\theta,\phi)$.
Lowercase Latin indices $a,b$ take the values $t,\rho$, while uppercase Latin indices $A,B$ label the angular coordinates.
For fixed $(\ell,m)$, the polar spherical harmonic decomposition of $\chi_{\mu\nu}$ is
\begin{align}\label{supp:eq:polar-spin2-harmonics}
 \chi_{ab} & =H_{ab}^{\chi}(\rho)Y_{\ell m}\me^{-\mi\omega t},\nonumber\\
 \chi_{aA} & =j_a^{\chi}(\rho)D_A Y_{\ell m}\me^{-\mi\omega t},\nonumber\\
 \chi_{AB} & =\rho^2\left[K_\chi(\rho)\Omega_{AB}Y_{\ell m}
 +G_\chi(\rho)Y_{AB}^{\ell m}\right]\me^{-\mi\omega t}.
\end{align}
Here $\Omega_{AB}$ and $D_A$ are the metric and covariant derivative on the unit sphere, and $Y_{AB}^{\ell m}\equiv[D_A D_B+\ell(\ell+1)\Omega_{AB}/2]Y_{\ell m}$.
The amplitudes $H_{ab}^{\chi}$, $j_a^{\chi}$, $K_\chi$, and $G_\chi$ define the Zerilli-Moncrief gauge invariants $\widetilde{\chi}_{ab}$ and $\widetilde{K}_\chi$ \cite{Martel:2005ir}.
Let $\bar{\nabla}_a$ be the covariant derivative on the two-dimensional orbit space and define $\rho_a\equiv\bar{\nabla}_a\rho$.
The corresponding Zerilli-Moncrief master variable is
\begin{equation}\label{supp:eq:polar-zerilli-variable}
 Z_2^{(+)}=\frac{2\rho}{\ell(\ell+1)}\left[
 \widetilde{K}_\chi+
 \frac{2}{(\ell-1)(\ell+2)+6M/\rho}\left(
 \rho^a\rho^b\widetilde{\chi}_{ab}
 -\rho \rho^a\bar{\nabla}_a\widetilde{K}_\chi
 \right)\right].
\end{equation}
Substituting Eq.~\eqref{supp:eq:polar-spin2-harmonics} into Eq.~\eqref{supp:eq:polar-spin2-field}, the transverse and traceless conditions eliminate the dependent radial amplitudes.
Writing the remaining dynamical equation in terms of $Z_2^{(+)}$ gives
\begin{equation}\label{supp:eq:polar-zerilli-equation}
 \mathcal{R}_{\rm Z}Z_2^{(+)}
 \equiv\left[\partial_{r_*}^2+\omega^2-V_{\rm Z}(\rho)\right]Z_2^{(+)}=0,
\end{equation}
where the Zerilli potential is
\begin{equation}\label{supp:eq:polar-zerilli-potential}
 V_{\rm Z}=\frac{F}{\left[(\ell-1)(\ell+2)+6M/\rho\right]^2}\left[
 (\ell-1)^2(\ell+2)^2
 \left(\frac{\ell(\ell+1)}{\rho^2}+\frac{6M}{\rho^3}\right)
 +\frac{36M^2}{\rho^4}\left((\ell-1)(\ell+2)+\frac{2M}{\rho}\right)
 \right].
\end{equation}
Metric perturbations satisfying $\mathcal{E}_{\mu\nu}[h^{\rm E}]=0$ also satisfy $\delta\mathcal{B}_{\mu\nu}[h^{\rm E}]=0$ and therefore belong to the solution space of the linearized Bach equation.
For an asymptotically flat Schwarzschild background, $\mathcal{E}_{\mu\nu}[h^{\rm E}]=\delta G_{\mu\nu}[h^{\rm E}]=0$ is the standard linearized vacuum Einstein equation.
In the RW gauge, the polar part of $h_{\mu\nu}^{\rm E}$ is described by the Zerilli master variable $\Psi_{\rm E}^{(+)}$.
The standard polar reduction gives \cite{Martel:2005ir}
\begin{equation}\label{supp:eq:polar-einstein-sector}
 \mathcal{R}_{\rm Z}\Psi_{\rm E}^{(+)}=0.
\end{equation}
On the Schwarzschild background, the Zerilli and RW equations are related by the standard Chandrasekhar-Darboux transformation, which preserves the QNM boundary conditions away from the algebraically special frequencies \cite{Glampedakis:2017rar}.
Consequently, $Z_2^{(+)}$ in the non-Einstein sector and $\Psi_{\rm E}^{(+)}$ in the Einstein sector are isospectral to their axial spin-2 counterparts.
The usual polar Zerilli solutions are embedded in the Bach solution space, and the complete polar Bach spectrum consists of the Zerilli and spin-1 branches.
Since the polar and axial spin-1 master variables obey the same Maxwell-type radial equation, the complete polar and axial Bach spectra are isospectral.

\endgroup

\end{document}